\newif\ifdraft
\drafttrue 
\documentclass[12pt]{article}

\usepackage[T1]{fontenc}
\usepackage{tablefootnote}
\usepackage{lmodern, microtype}
\usepackage[left=1.25in, right=1.25in, top=1.25in, bottom=1.25in]{geometry}
\usepackage[small]{titlesec}
\usepackage{mathtools,comment,paralist}
\usepackage{epigraph}

\usepackage{amssymb,amsmath,amsthm,fancyhdr,bbm}
\usepackage{array}
\newcolumntype{L}[1]{>{\raggedright\arraybackslash}p{#1}}

\usepackage{xcolor}
\usepackage{hyperref}
\hypersetup{
  colorlinks=true,
  linkcolor=blue,
  citecolor=red
}
\usepackage{natbib}
\setcitestyle{authoryear,open={(},close={)}}
\makeatletter

\usepackage{booktabs}
\usepackage{subcaption}
\usepackage{tikz}
\usetikzlibrary{arrows.meta}
\usetikzlibrary{decorations.pathreplacing}
\usepackage{makecell}

\newtheorem{theorem}{Theorem}

\newtheorem{lemma}{Lemma}
\newtheorem*{lemma*}{Lemma}

\newtheorem*{axiom*}{Axiom}
\newtheorem{proposition}{Proposition}
\newtheorem{claim}{Claim}
\newtheorem{corollary}{Corollary}

\newtheorem*{theorem*}{Theorem}

\newtheorem{definition}{Definition}
\newtheorem*{definition*}{Definition}

\theoremstyle{definition}

\usepackage{graphicx}
\usepackage{pstricks, enumerate, pst-node, pst-text, pst-plot}
\usepackage{soul}

\usepackage{accents}

\def\Pr{\mathbb{P}}
\def\birq{\mathcal{B}_{i, r}^q}
\def\cir{C^q_{i,r}}

\def\cQ{\mathcal{Q}}
\def\eps{\varepsilon}

\title{Local Global Games and Network Common Learning}

\author{Olga Rospuskova \and Omer Tamuz \and Jake Zhang\thanks{Caltech. Olga Rospuskova was supported by a Citadel fellowship. Omer Tamuz was supported by a National Science Foundation CAREER award (DMS-1944153) and by a MURI award (N000142412742).  }}

\date{\today}

\begin{document}

\maketitle
\begin{abstract}
    We study global games in which agents coordinate locally, with their social network neighbors, contingent on a favorable state. Before acting, agents learn the private signals of all agents within network distance $r$. As $r$ grows, every agent learns the state, but efficient coordination depends on higher-order beliefs, which are shaped by the geometry of the network. We introduce network common learning, a network analogue of common learning, and show that it is attained when neighboring agents' observations differ by many signals, as on the two-dimensional grid, but fails on networks with informational bottlenecks, such as the line, where only the safe action survives in equilibrium.
\end{abstract}

\section{Introduction}

Global games are a successful and well-studied model of a number of important phenomena, including bank runs, protests and more \citep{carlsson1993global,morris2003global,goldstein2005demand,angeletos2007dynamic}. The basic model is of a group of agents that need to act in unison to achieve a goal, and furthermore can only succeed when the state of nature is favorable. The important insight they provide is that even when agents have very precise information about the state, coordination can fail because of higher-order uncertainty \citep{rubinstein1989electronic,carlsson1993global,morris2003global}. The study of global games has spurred a literature within epistemic game theory, dedicated to understanding how beliefs affect coordination. A particularly interesting concept is the idea of common learning \citep*{cripps2008common}, which captures a convergence not just to knowing the state, but to commonly knowing it \cite*[see also][]{steiner2011communication,cripps2013common,frick2023learning}.

Our setting features a classic global game. There is a state that is either high or low, and each agent has to choose between a safe and a risky action. When the state is low, the safe action is optimal. When the state is high, it is better to take the risky action, but only if others take it too. We introduce a network structure to this game, so that agents do not need to coordinate with all others, but only locally, with their social network neighbors; accordingly, we call this game a \emph{local global game}. This changes the higher-belief structure required for coordination. Instead of the common $q$-belief that is needed for coordination on a complete network, a general network requires a weaker condition of iterated beliefs, which we call network $q$-beliefs, and which we show is equivalent to the existence of an efficient equilibrium. This is similar to the conditions explored by \cite*{morris1995p} and \cite{ kets2008beliefs}.

Our network plays an additional important role, as a conduit of information. Initially, agents are endowed with conditionally independent private signals about the state. Before choosing their actions, the agents participate in a communication stage, in which they share their signals with their neighbors, their neighbors' neighbors and so on, up to some graph distance $r$, which we call the observation radius. For simplicity, and since this is a coordination game, we model this communication phase non-strategically, and assume that players simply reveal their private information.

For large $r$, each agent learns the state with high confidence. Moreover, neighboring agents observe information that is highly overlapping, seemingly allowing their beliefs to align more easily. But does individual learning and the overlap of information allow agents to resolve strategic uncertainty and coordinate efficiently?



We show that the answer to this question depends sharply on the geometry of the network. Consider two stylized examples: the infinite line and the infinite two-dimensional grid. On both networks, increasing the observation radius allows every agent to learn the state with arbitrarily high confidence. Nevertheless, the equilibrium implications are different. On the line, when coordination requires sufficiently high confidence, taking the safe action is the only action taken in equilibrium, no matter how large the radius $r$ is. On the two-dimensional grid, by contrast, when the radius is large there exist efficient equilibria in which the agents, with high probability, coordinate and take the risky action in the high state.

Why does the additional dimension reverse the equilibrium prediction? The key is not how much information each agent observes but how the available information changes across agents. Two neighboring agents observe largely overlapping neighborhoods, yet each also sees some signals that the other does not. We refer to signals observed by one agent but not by his neighbor as his \emph{information boundary}. On the line, this region contains only one signal, regardless of the size of the observation radius. On the two-dimensional grid, the information boundary grows proportionally with the observation radius.

This difference determines the feasibility of coordination. When the size of the information boundary remains bounded, there is no concentration of measure, and the uncertainty about a neighbor's information never vanishes. This concern propagates from neighbor to neighbor, eventually ruling out risky coordination.
On the other hand, when the information boundary is large, the unobserved signals become predictable, because of concentration of measure. Strategic uncertainty diminishes as the radius grows, allowing confidence in risky play to sustain itself across the network. 

To construct efficient equilibria, we study what we call \emph{network $q$-belief} \citep*[see also][]{morris1995p,kets2008beliefs}. This notion is defined recursively. At the first level, an agent is required to assign probability at least $q$ to the state being high. At each subsequent level, he needs to assign probability at least $q$ to the joint event that the state is high and that all his neighbors assign probability $q$ to the events defined at the preceding level. We show that the limit of this iterated belief satisfies the same fixed-point condition as the agents' best responses in  equilibrium. Namely, it yields a collection of \emph{self-sustaining events}: for each agent, whenever his corresponding event in the collection happens, he assigns probability at least $q$ to the joint event that the state is favorable and that every neighbors' corresponding event also happens. In analogy to the idea of common learning \citep*{cripps2008common}, we define \emph{network common learning} to hold if for all $q$, network $q$-belief holds with high probability for all $r$ large enough.

We use this technique to prove that coordination is possible on the two  dimensional grid. The key geometric property of the grid is what we call \emph{large information boundaries} (Definition~\ref{def:boundaries}), a condition that applies to many other networks. To define the large information boundaries condition, we first define the collection of \emph{test regions}: these are regions in the social network graph that are finite intersections of the agents' observation neighborhoods. Large information boundaries are attained when every large test region that is observed by some agent $i$ has a large information boundary: the part of it that is not seen by $i$'s neighbor $j$ is large (or empty). The term ``large'' is used informally here, but is defined formally below.

Large information boundaries allow us to define self-sustaining events that yield efficient equilibria. The self-sustaining events require that in each test region the fraction of favorable signals is large. That is, to coordinate an agent needs to not just see many favorable signals (yielding individual learning) but see many favorable signals in each test region, which is an intersection of the regions seen by a group of agents.


Our final result provides a complementary impossibility theorem. Suppose the network contains a ``bottleneck'': a path along which moving from one agent to the next introduces a bounded number of new signals; this holds for the one-dimensional grid. That is, the information boundary of the observation neighborhoods along this path is of bounded size. This implies that the higher-order uncertainty generated by the boundary is significant regardless of $r$, and is not mitigated by concentration of measure. Consequently, when $q$ is high enough, the collection of events that can be believed with probability $q$ shrinks as we iterate over higher-order beliefs and move along the path, diminishing to the empty set if the path is long enough. Network $q$-belief is consequently empty, and the safe action is the unique equilibrium prediction.

\medskip

Our paper identifies a new channel through which network geometry affects coordination. Network geometry matters not only because it determines who interacts with whom and how information is diffused, but also because it determines how much uncertainty remains when agents reason about the actions of their neighbors. 
When neighboring observation sets differ by many signals, the fresh information received by each agent becomes predictable in the aggregate, allowing confidence in others’ actions to be sustained. When observation sets change only gradually, uncertainty about a neighbor’s information cannot be ignored and may propagate over long distances, even though every individual agent is highly informed about the state.

\subsection{Related literature}
Our model follows the literature on global games \citep{carlsson1993global, morris2003global}.  The basic lesson of this literature is that arbitrarily precise private information need not induce coordination because equilibrium behavior also hinges on higher-order beliefs \citep{rubinstein1989electronic}. We inherit this lesson but modify the object of higher-order beliefs. In the canonical model, an agent needs to coordinate with the entire population, and this imposes conditions on belief involving every agent at every epistemic order. In our model, an agent's payoff only depends on the actions of his neighbors, and so the relevant object is network $q$-belief, which chains beliefs along the edges of the network. Despite this difference from the classic literature, our impossibility result is in the same family as the coordinated attack problem of \citet{halpern1984knowledge}: the argument in the proof of 
Theorem~\ref{thm:path-impossibility} is a spatial version of the temporal mechanism governing the electronic mail game of \citet{rubinstein1989electronic}, with graph distance along a
bottleneck path playing the role of the number of messages exchanged.

Network q-belief is a network analogue of common p-belief, as defined by \citet{monderer1989approximating}, who show that common p-belief admits a fixed-point characterization; we derive a network analogue of their fixed-point property. Common p-belief is also central to the analysis of \citet{kajii1997robustness}, who show that an equilibrium is robust to incomplete information when the event on which the game is common knowledge can be commonly p-believed, and whose contagion arguments are echoed in our impossibility result. \cite{kets2008beliefs} is closest to us in adapting the belief operator to the network structure. Beyond uncertainty about the state, \cite{kets2008beliefs} also considers uncertainty about the network structure, and applies these techniques to the study of anonymous network games.

\cite{morris2000contagion} studies how local interactions determine whether behavior can spread contagiously through a network. In his complete information setting, agents repeatedly best respond to the actions of their neighbors. He shows that an action adopted by a finite group of agents can be spread through the entire network only when there is no cohesive group that interacts mostly within the group and resists the outside contagion. Our analysis shares the emphasized propagation and recursive structure but the object being propagated is different. In Morris, an agent counts the proportion of his neighbors taking a certain action and the observed action diffuses through the network. In contrast, we study how uncertainty propagates through higher-order beliefs and the network structure. Network geometry therefore matters not only through who interacts with whom, as in \cite{morris2000contagion}, but also through how information changes across neighboring positions.
Related work on local interaction dynamics and geometry includes \citet{ellison1993learning}, \citet{blume1993statistical} and
\citet{oyama2015contagion}.

\cite*{cripps2008common} study agents who accumulate private signals over time and ask whether this leads to common $p$-belief. They show that when the number of signal realizations is finite, individual learning implies common learning; see also \cite*{frick2023learning}. Our observation radius $r$ plays the role of time in their model: as the radius $r$ grows, the information of each agent becomes arbitrarily precise and the question of interest is whether higher-order beliefs also grow. In our model, individual learning does not always imply common learning, even with finite signals; it is the geometry of the graph which decides.

\citet{chwe2000communication} also studies coordination in which a network is used to communicate, and asks which networks are minimally sufficient for coordination. In his model communication is used to build exact common knowledge within the set of participants, so the binding constraint is whether a network has the right links.

A recent paper by \cite*{hutchcroft2026local} also studies coordination on social networks using local communication, but in a crucially different setting, resulting in very different conclusions. In that paper, there is no global state, and agents play a pure coordination game (e.g., choosing which side of the road to drive on). Thus, the challenge is to communicate and coordinate on how to break ties between a priori equivalent actions. In that setting, coordinating on the line graph is possible, and in fact higher efficiency can be achieved on the line than on the two dimensional grid. Amenability, the geometric property enabling coordination in that paper, is related to having small information boundaries, while in our setting large information boundaries are a sufficient condition for coordination.

\section{Model}

Let $N$ be a set of agents, connected through a social network $G=(N, E)$.  An edge $(i, j) \in E$ indicates that agents $i$ and $j$ are neighbors. We assume that the social network is undirected, so that $(i, j)\in E$ if and only if $(j, i)\in E$. For convenience, $(i, i) \not \in E$. For each $i\in N$, let $N_i = \{j\in N: (i, j) \in E\}$ denote the set of $i$'s neighbors. We assume that every agent has finitely many neighbors, i.e. $d_i = |N_i|< \infty$ for every $i\in N$. We will be interested in the limit of very large networks, and so for simplicity we will assume that the number of agents is countably infinite.

A path from $i$ to $j$ is a finite sequence of agents beginning at $i$ and ending at $j$, such that every two consecutive agents are neighbors. We assume that $G$ is connected, i.e., that there is a path connecting each pair of agents. This is without loss of generality, since otherwise one can consider each connected component in isolation. The distance between $i$ and $j$ is the length of the shortest path between them. Each agent observes all agents at graph distance at most $r$, where $r\geq 0$. We call $r$ the observation radius. We also define $B_r(i)\subseteq N$ as the set of agents at distance at most $r$ from $i$. We refer to this set as $r$-neighborhood of $i$, or as the observation neighborhood of agent $i$. 

We consider a coordination game played on the network $G$. 
The payoff of an agent depends on the underlying state, his own action and the actions of his neighbors. The state is binary: $\theta \in \Theta$, where $\Theta = \{H, L\}$, with $H$ and $L$ denoting the high and low states respectively.
Each agent chooses between a safe action $S$ and a risky action $R$. The safe action always guarantees a payoff of zero. The risky action yields a payoff $\alpha > 0$ only if the state is $H$ and all his neighbors coordinate on the risky action; otherwise, it incurs a loss of $\beta > 0$. Formally, the payoff of an agent $i$ under state $\theta$ and action profile $a \in \{S,R\}^N$ is given by
\begin{align*} u_i(a,\theta) = \begin{cases}
    0, & \text{ if } a_i = S,\\
    \alpha, & \text{ if } \theta = H, a_i = R \text{ and } a_j = R \ \forall j\in N_i, \\
    -\beta, & \text{ otherwise.}
\end{cases} \end{align*}



The agents are uncertain about the state and share a common prior. For simplicity, we assume $\Pr (\theta = H) = \Pr(\theta = L) = 1/2$. Each agent $i$ receives a binary signal $s_i$ about the state. Conditional on the state, the signals are independent. We assume that the signals are symmetric with precision $p > 1/2$ and take values in $\{0,1\}$.  That is, \[
\Pr(s_i = 1|\theta = H) = \Pr(s_i = 0|\theta=L) = p.
\]
The state and signals are random variables in the probability space $(\Omega,\mathcal{F},\Pr)$, where $\Omega = \{H,L\}\times\{0,1\}^N$, $\mathcal{F}$ is the product sigma-algebra, and $\Pr$ is the measure described above.

Before choosing an action, agents participate in a communication phase. In this phase, each agent $i$ observes the private signals of all agents in $B_r(i)$, for some fixed radius of observation $r$. We denote the set of possible observations by $O^r_i = \{0,1\}^{B_r(i)}$. The information available to agent $i$ is the random variable $I_i^r = (s_j)_{j \in B_r(i)}$, which takes values in $O_i^r$. Thus, a pure strategy for agent $i$ in this game is a map $t_i \colon O_i^r \to \{S,R\}$. This is a Bayesian game, and accordingly our solution concept is that of Bayes-Nash equilibrium.

Since the network is infinite and connected, the size of $B_r(i)$ goes to infinity as $r$ increases, and so the number of signals observed by each agent in the communication phase increases. Since private signals are informative about the state, agents individually learn the state. In particular, as $r$ tends to infinity, each agent's posterior belief $\Pr(\theta=H|I_i^r)$ converges to $1$ if $\theta=H$, and converges to $0$ if $\theta=L$, i.e. for all $\eps>0$,
\begin{align*}
    \lim_{r \to \infty}\Pr(\Pr(\theta=H|I_i^r)>1-\eps|\theta=H) = \lim_{r \to \infty}\Pr(\Pr(\theta=H|I_i^r)<\eps|\theta=L) =1.
\end{align*}

We focus on pure Bayes-Nash equilibria, which are characterized by a collection of sets $(X_i^r)_{i\in N}$, where $X_i^r\subseteq O_i^r$. These are the sets on which $i$ chooses the risky action, i.e.,  $a_i = R$ if and only if $I_i^r \in X_i^r$. The expected utility of agent $i$ from choosing the risky action upon observing $I_i^r = x$ is
$$\alpha \Pr(\theta = H, I_j^r \in X_j^r \ \forall j \in N_i|I_i^r = x) - \beta (1 - \Pr(\theta = H, I_j^r \in X_j^r \ \forall j \in N_i|I_i^r = x)),$$
whereas choosing a safe action yields zero. 

Therefore, the strategy profile induced by $(X_i^r)_{i\in N}$ is an equilibrium if and only if, for every $i\in N$, 
\begin{align*}
    & \forall x \in X_i^r, \quad \Pr(\theta =H,  I_{j}^r \in X_j^r \ \forall j\in N_i |  I_{i}^r = x)\geq \frac{\beta}{\alpha + \beta}, \\
    & \forall y \notin X_i^r, \quad \Pr(\theta =H, I_{j}^r \in X_j^r \ \forall j\in N_i |  I_{i}^r = y) \leq \frac{\beta}{\alpha + \beta}.
\end{align*}

Denoting $q =\frac{\beta}{\alpha + \beta}$, the risky action is optimal whenever the probability that the state is high and all neighbors choose risky action is at least $q$. We resolve ties in favor of $R$, assuming that agents take the risky action when they are indifferent
.

We ask for which networks agents can coordinate efficiently in equilibrium when they observe signals within a sufficiently large radius $r$. That is, with high probability, they choose the risky action in the high state and the safe action in the low state.

\subsection{Network $q$-belief and Network Common Learning}

For any event $E\subseteq\Omega$, define $$\birq(E) = \{\omega\in \Omega: \Pr(E|I_i^r)(\omega) \geq q\}.$$

This is the event on which agent $i$, using signals in $B_r(i)$, assigns probability at least $q$ to $E$.\footnote{Since $B_r(i)$ is finite, $\sigma(I_i^r)$ is generated by the finite partition $\big\{\{I_i^r = x\}\,:\, x \in O_i^r\big\}$, in which every atom has positive probability. Hence the conditional probability $\Pr(E | I_i^r)$ admits a canonical version, given on each atom by $\Pr(E \cap \{I_i^r = x\})/\Pr(I_i^r = x)$, and  $\birq(E)$ is well defined as the union of the atoms on which this value is at least $q$.} If each neighbor $j$ of $i$ chooses the risky action on $X_j^r$, then $\birq(H\cap (\bigcap_{j\in N_i} X_j^r))$ is the event on which agent $i$ assigns a probability at least $q$ to the event that the state is high and every neighbor of $i$ chooses $R$.\footnote{Here and below we use $H$ to denote the event $\{\theta =H\}$, and use $X_j^r$ to denote the event $\{I_j^r \in X_j^r\}$.} Therefore, the equilibrium condition can equivalently be written as \begin{align} \label{eq:eqm}
     \forall i\in N, \quad X_i^r = \birq\left(H\cap \left(\bigcap_{j\in N_i}X_j^r\right)\right).
\end{align}

This is a fixed-point condition. To identify its solutions, we study iterated network beliefs. Starting from the entire sample space $\Omega$, at each subsequent step agent $i$ leaves only those in which he $q$-believes that state is $H$ and that the previous-level condition holds for every neighbor. Formally, let $C^{q,0}_{i,r} = \Omega$. For $n\geq 1$, define
\begin{align*}
 C^{q,n}_{i,r}(H) = \birq\left(H\cap \left(\bigcap_{j\in N_i}C_{j,r}^{q, n-1}(H)\right)\right).
\end{align*}
This is the event that agent $i$ has $n$\textsuperscript{th} order network $q$-belief. In particular, $$C^{q,1}_{i,r}(H) =\birq(H) $$ is the event in which agent $i$ assigns probability at least $q$ to the high state. Likewise, $$C^{q,2}_{i,r}(H) = \birq\left(H\cap \left(\bigcap_{j\in N_i}\mathcal{B}_{j, r}^q(H)\right)\right)$$ is the event that agent $i$ has second order network $q$-belief: $i$ assigns probability at least $q$ to the joint event that state is $H$ and all his neighbors believe that state is $H$ with probability at least $q$. At higher levels, the same condition is applied recursively to $i$'s neighbors, their neighbors and so on. 

 
 The event that $i$ has network $q$-belief is 
 $$\cir(H) =\bigcap_{n\geq 0} C^{q,n}_{i,r}(H) .$$
This is the event on which $q$-belief in $H$ persists through every finite order of beliefs, starting from $i$. The following lemma shows that the limiting event satisfies the same fixed-point condition required for equilibrium. To simplify notation, we henceforth omit the $(H)$ and write $C^{q,n}_{i,r}$ and $\cir$ to indicate $C^{q,n}_{i,r}(H) $ and $\cir(H)$.

\begin{lemma}
\label{lem:fixedpoint}
    For every $i$ and $r$,
    $$\cir  = \birq\left(H\cap \left(\bigcap_{j\in N_i}C^q_{j,r}\right)\right).$$
\end{lemma}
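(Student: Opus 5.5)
The plan is to prove the two inclusions separately. Two basic properties of the belief operator $\birq$ will do the work: monotonicity, meaning $E\subseteq F$ implies $\birq(E)\subseteq\birq(F)$, and a continuity property along decreasing sequences. The latter is available because $\sigma(I_i^r)$ is generated by a finite partition with positive-probability atoms.

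First I will check that the sequence $C^{q,n}_{i,r}$ is decreasing in $n$ for every $i$, by induction on $n$. The base case $C^{q,1}_{i,r}\subseteq \Omega = C^{q,0}_{i,r}$ is trivial. If $C^{q,n}_{j,r}\subseteq C^{q,n-1}_{j,r}$ for all $j$, then
\[
H\cap\bigcap_{j\in N_i} C^{q,n}_{j,r}\subseteq H\cap\bigcap_{j\in N_i} C^{q,n-1}_{j,r},
\]
and monotonicity of $\birq$ gives $C^{q,n+1}_{i,r}\subseteq C^{q,n}_{i,r}$.

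For the inclusion $\supseteq$, note that $C^q_{j,r}\subseteq C^{q,n-1}_{j,r}$ for every $n\ge 1$. Monotonicity then gives
\[
\birq\Big(H\cap\bigcap_{j\in N_i}C^q_{j,r}\Big)\subseteq \birq\Big(H\cap\bigcap_{j\in N_i}C^{q,n-1}_{j,r}\Big)= C^{q,n}_{i,r}
\]
for all $n\ge 1$. The case $n=0$ is trivial, so intersecting over $n$ yields the claim.

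For the inclusion $\subseteq$, take $\omega\in\cir$ and let $x=I_i^r(\omega)$. For every $n$ we have $\Pr\big(H\cap\bigcap_{j\in N_i}C^{q,n-1}_{j,r}\mid I_i^r=x\big)\ge q$. The events $E_n=H\cap\bigcap_{j\in N_i}C^{q,n-1}_{j,r}$ are decreasing in $n$, by the first step. Since $N_i$ is finite, their intersection is $H\cap\bigcap_{j\in N_i}\bigcap_n C^{q,n-1}_{j,r}=H\cap\bigcap_{j\in N_i}C^q_{j,r}$. The atom $\{I_i^r=x\}$ has positive probability, so $\Pr(\cdot\mid I_i^r=x)$ is a genuine probability measure. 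Continuity from above then gives $\Pr(\bigcap_n E_n\mid I_i^r=x)=\lim_n \Pr(E_n\mid I_i^r=x)\ge q$, hence $\omega\in\birq(H\cap\bigcap_{j}C^q_{j,r})$.

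I expect the only delicate point to be this limit step: exchanging the intersection with the conditional probability. It requires three things, each of which holds here. The sequence must be monotone, which is why the decreasing property is established first. The finite-degree assumption lets the intersection over $j$ commute with the intersection over $n$. And the canonical atom-wise version of conditional probability (from the footnote) lets us avoid any almost-sure qualifications.
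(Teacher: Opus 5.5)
Your proof is correct and is essentially the paper's argument. Monotonicity of $\birq$ makes $C^{q,n}_{j,r}$ decreasing in $n$, and continuity from above of $\Pr(\cdot\mid I_i^r=x)$ on the positive-probability atom lets the intersection over $n$ pass through the belief operator; the paper simply packages your two inclusions as the single identity $\bigcap_n \birq(E_n)=\birq(\bigcap_n E_n)$.

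One small correction to your commentary: interchanging $\bigcap_n$ and $\bigcap_{j\in N_i}$ is valid for arbitrary index sets, so finiteness of $N_i$ is not actually needed at that step.
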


Thus, network $q$-belief is \textit{self-sustaining}: whenever it holds at $i$, agent $i$ assigns probability at least $q$ to the joint event that the state is $H$ and network $q$-belief holds at each of $i$'s neighbors.

Since the family of sets $(C^{q}_{i,r})_{i \in N}$ satisfies the equilibrium condition \eqref{eq:eqm}, it induces an equilibrium. The next proposition shows that this equilibrium has the largest possible risky set: whenever another equilibrium prescribes the risky action to agent $i$, this equilibrium does so as well.


\begin{proposition}
\label{prop:maximal_eqm}
    The family $(\cir)_{i\in N}$ induces an equilibrium in which $X_i^r = \cir$. Moreover, it is the maximal risky equilibrium in the sense that in any other equilibrium, $a_i=R$ only on the event $\cir$.
\end{proposition}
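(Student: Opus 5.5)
The first claim follows at once from Lemma~\ref{lem:fixedpoint}. Set $X_i^r = \cir$. Each $\cir$ is a union of atoms $\{I_i^r = x\}$, because each $C^{q,n}_{i,r}$ is of the form $\birq(\cdot)$ for $n \ge 1$ and is therefore $\sigma(I_i^r)$-measurable, and a countable intersection preserves measurability. Hence $\cir$ can be identified with a subset of $O_i^r$, and it defines a legitimate pure strategy. Lemma~\ref{lem:fixedpoint} says exactly that $X_i^r = \birq\bigl(H\cap \bigcap_{j\in N_i}X_j^r\bigr)$ for every $i$, which is the equilibrium condition \eqref{eq:eqm}. Under the tie-breaking convention in favor of $R$, this condition is equivalent to the two best-response inequalities, so the profile is a Bayes--Nash equilibrium.

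For maximality, take any equilibrium $(X_i^r)_{i\in N}$, viewed as events. I will show by induction on $n$ that $X_i^r \subseteq C^{q,n}_{i,r}$ for every $i$ and every $n\ge 0$.

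\emph{Base case.} For $n=0$ the claim is trivial, since $C^{q,0}_{i,r}=\Omega$.

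\emph{Inductive step.} Suppose $X_j^r \subseteq C^{q,n-1}_{j,r}$ for all $j$. Then
\[
H\cap \bigcap_{j\in N_i}X_j^r \;\subseteq\; H\cap \bigcap_{j\in N_i}C^{q,n-1}_{j,r}.
\]
The belief operator $\birq$ is monotone: if $E\subseteq F$ then $\Pr(E\mid I_i^r)\le \Pr(F\mid I_i^r)$ atom by atom, so $\birq(E)\subseteq\birq(F)$. Applying this together with the equilibrium condition \eqref{eq:eqm} gives
\[
X_i^r = \birq\Bigl(H\cap \bigcap_{j\in N_i}X_j^r\Bigr)\subseteq \birq\Bigl(H\cap \bigcap_{j\in N_i}C^{q,n-1}_{j,r}\Bigr)=C^{q,n}_{i,r}.
\]

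Intersecting over $n$ yields $X_i^r\subseteq \cir$. So in any equilibrium, $a_i=R$ only on $\cir$.

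There is no real obstacle here. The one point needing care is that \eqref{eq:eqm} is an exact characterization of equilibria only under the tie-breaking convention. If one also wants to cover equilibria that play $S$ when indifferent, the argument still goes through. In that case one only has the inclusion $X_i^r \subseteq \birq\bigl(H\cap\bigcap_{j\in N_i} X_j^r\bigr)$ from the first best-response inequality, and this inclusion is all the induction uses. I would therefore phrase the induction using only that inclusion, so that the maximality claim holds for every equilibrium regardless of how ties are broken.
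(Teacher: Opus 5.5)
Your proof is correct and takes essentially the same approach as the paper: the equilibrium property is read off from Lemma~\ref{lem:fixedpoint}, and maximality is the same induction on $n$ via monotonicity of $\birq$. The paper also briefly extends maximality to mixed equilibria by running the argument on the sets $\{x : \sigma_i(x) > 0\}$ where $R$ gets positive weight. Your observation that only the inclusion $X_i^r \subseteq \birq\bigl(H\cap\bigcap_{j\in N_i}X_j^r\bigr)$ is needed is exactly what makes that extension work, as well as the robustness to tie-breaking.
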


We now study whether network $q$-belief in $H$ emerges as agents observe larger neighborhoods. Particularly, this requires that events $(\cir)_{i\in N}$ become very likely for every agent in state $H$. This motivates the following notion of network common learning. 

\begin{definition}
    Network $q$-learning of $H$ occurs if for every agent $i$, 
    $$  \lim_{r \to \infty}\Pr(\cir|\theta = H) =1.$$
    Network common learning of $H$ occurs if network $q$-learning of $H$ occurs for all $q \in (0,1)$.
\end{definition}

So, network $q$-learning means that in state $H$ network $q$-belief in $H$ occurs with probability approaching $1$ for every agent. It also implies asymptotic efficiency of the maximal risky equilibrium. Since the maximal equilibrium prescribes the risky action on these events, risky play becomes asymptotically likely in state $H$.  Individual learning implies that in state $L$ each agent chooses the safe action with high probability. For choosing the safe action, no higher-order beliefs are necessary: once the first-order belief in $H$ is below $q$, agents will choose the safe action.


\section{Efficient Coordination on Networks with Large Information Boundaries}

In this section we introduce a geometric condition on the social network graph which we show implies that network common learning is obtained, and thus efficient coordination can be sustained in equilibrium, for all $r$ large enough.

To define our geometric condition, recall that $B_r(i)$ is the ball of radius $r$ around agent $i$. Denote by $\cQ^r$ the family of all subsets of the agents that are (finite) intersections of balls of radius $r$:
\begin{align*}
    \cQ^r = \{ \cap_{k=1}^m B_r(i_k)\,:\, m \geq 1, i_1,\ldots,i_m \in N\}.
\end{align*}
We call each $Q \in \cQ^r$ a \emph{test region}, as these will be the sets of signals that will determine if players coordinate.
On the line graph, test regions are simply the finite intervals of length at most $2r+1$. On the two dimensional grid, where balls are diamond-shaped, these are 45-degree slanted rectangles (see Figure~\ref{fig:grid-information-boundary}). On any graph, a test region $Q$ includes those agents whose signals are observed by some finite group of agents $i_1,\ldots,i_m$, which is the reason they are useful in constructing equilibria that depend on these agents' higher-order beliefs.

We denote by $\cQ_i^r \subset \cQ^r$ the test regions that are included in $B_r(i)$:
\begin{align*}
    \cQ^r_i = \{Q \in \cQ^r\,:\, Q \subseteq B_r(i)\} = \{ Q \cap B_r(i) \,:\, Q \in \cQ^r\}.
\end{align*}
These are the test regions that are observed by $i$. 

Suppose $i$ and $j$ are neighbors in the graph. Given a test region $Q \in \cQ_i^r$ (observed by $i$), we call $Q \setminus B_r(j)$ the $(i,j)$-\emph{information boundary} of $Q$ (see Figure~\ref{fig:grid-information-boundary}). This is the part of $Q$ that is not observed by $j$. Our condition of \emph{large information boundaries} requires that when $Q$ is large, its  information boundaries are large, or else the boundaries are empty. 
\begin{definition}
    \label{def:boundaries}
    The social network graph has \emph{$(\eps,\delta)$-large information boundaries} if for all $r>0$ and neighbors $i,j$ it holds that for every test region $Q \in \cQ^r_i$ of size at least $\eps|B_r(i)|$, the $(i,j)$-information boundary $Q \setminus B_r(j)$ is either empty, or else is of size at least $\delta \cdot r$.

    The social network graph has \emph{large information boundaries} if for all $\eps>0$ there is a $\delta>0$ such that the graph has $(\eps,\delta)$-large information boundaries.
\end{definition}
In the case in which the boundary is empty, higher-order beliefs are trivial: $Q$ is completely observed by $j$. When not all of $Q$ is observed by $j$, this property requires that the boundary is large, which will help us control $j$'s higher-order beliefs.

\begin{figure}[t]
\centering

\begin{tikzpicture}[
    x=0.72cm,
    y=0.72cm,
    vertex/.style={circle,fill=black,inner sep=1.5pt},
    bigvertex/.style={circle,fill=black,inner sep=2.3pt},
    boundaryvertex/.style={
        circle,
        fill=red,
        draw=red,
        inner sep=2.3pt
    },
    font=\scriptsize
]

\fill[red!8]
    (0,5) -- (2,3) -- (-3,-2) -- (-5,0) -- cycle;

\foreach \x in {-9,...,9} {
    \draw[line width=0.35pt] (\x,-6) -- (\x,6);
}
\foreach \y in {-6,...,6} {
    \draw[line width=0.35pt] (-9,\y) -- (9,\y);
}

\foreach \x in {-9,...,9} {
    \foreach \y in {-6,...,6} {
        \node[vertex] at (\x,\y) {};
    }
}

\draw[black!55,dashed,line width=0.9pt]
    (1,5) -- (6,0) -- (1,-5) -- (-4,0) -- cycle;

\draw[red,line width=1.1pt]
    (0,5) -- (2,3) -- (-3,-2) -- (-5,0) -- cycle;

\draw[red!55,line width=3pt]
    (0,5) -- (-5,0) -- (-3,-2);

\foreach \x/\y in {
     0/5,
    -1/4, 
    -2/3, 
    -3/2, -3/-2,
    -4/1, -4/-1,
    -5/0
} {
    \node[boundaryvertex] at (\x,\y) {};
}

\node[bigvertex] at (0,0) {};
\node[bigvertex] at (1,0) {};

\node[below=6pt] at (0.2,0.3) {$i$};
\node[below=6pt] at (1.2,0.3) {$j$};

\node[red,fill=white,inner sep=1.5pt] at (1.35,3.15) {$Q$};

\node[black!60,fill=white,inner sep=1.5pt]
    at (3.85,2.15) {$B_r(j)$};

\node[red,fill=white,inner sep=1.5pt]
    (boundarylabel) at (-6.8,4.6)
    {$Q\setminus B_r(j)$};

\draw[red,->,line width=0.8pt]
    (boundarylabel.south east) -- (-3.2,2.3);

\end{tikzpicture}

\caption{A test region $Q \subset B_r(i)$ is shown in red. The observation neighborhood of $j$ is shown by a dotted line. The highlighted red agents form the
nonempty $(i,j)$-information boundary $Q\setminus B_r(j)$.}
\label{fig:grid-information-boundary}
\end{figure}

For network common learning we will need two additional geometric conditions. The first is that the collection of test regions $\cQ_i^r$ is not too large.
\begin{definition}
    The social network graph has \emph{subexponential}-$\cQ$ if the size of the collection $\cQ_i^r$ satisfies $\lim_{r \to \infty} \left(\sup_{i \in N}|\cQ_i^r|\right)^{1/r}=1$.
\end{definition}
The second is that the size of neighborhoods does not vary too much between agents.
\begin{definition}
    The social network graph has \emph{similar neighborhood sizes} if there is a constant $K \geq 1$ such that for all $r>0$ and for all $i,j \in N$ it holds that $|B_r(i)|/|B_r(j)| \leq K$.
\end{definition}
It is straightforward to verify that all three of these conditions are satisfied by the two-dimensional grid.\footnote{In the coordinates $u = x+y$, $v = x-y$, the ball $B_r(i)$ of
$\mathbb{Z}^2$ is a box $\{|u-u_i|\leq r\} \cap \{|v-v_i|\leq r\}$, so test
regions are boxes $[u_1,u_2]\times[v_1,v_2]$. Similar neighborhood sizes holds
with $K=1$ since the graph looks the same everywhere. A test region contained in $B_r(i)$ is
determined by four coordinates in a window of length $2r+1$, so
$|\mathcal{Q}_i^r| \leq (2r+1)^4$, giving subexponential-$\mathcal{Q}$. For
large boundaries, note that $|B_r(i)| \geq 2r^2$, so a box $Q$ with
$|Q|\geq \eps |B_r(i)|$ and side lengths at most $2r+1$ has both sides of
length at least $\eps r/2$. Moving to a neighbor $j$ shifts the box $B_r(j)$
by one unit in $u$ and in $v$; hence if $Q \setminus B_r(j)$ is nonempty, $Q$
protrudes past a face of $B_r(j)$, and the boundary contains a full side of
$Q$, of size at least $\delta r$ with $\delta = \eps/4$.} Moreover, they are satisfied by higher dimensional grids, but much more generally by the large collection of graphs that have the same global geometry as these graphs, in precisely the sense captured by these definitions.

Our main theorem is that these conditions are sufficient for network common learning.
\begin{theorem}
    \label{thm:common-learning}
    Suppose that $G$ satisfies the large information boundary condition,  the subexponential-$\cQ$ condition, and the similar neighborhood sizes condition. Then network common learning is obtained on $G$. 
\end{theorem}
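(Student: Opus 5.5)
Fix $q\in(0,1)$ and an agent $i$. By Lemma~\ref{lem:fixedpoint} and Proposition~\ref{prop:maximal_eqm}, $\cir$ is the largest family solving the fixed-point condition. So it is enough to exhibit, for every $r$ large enough, a self-sustaining family of events $(A_k^r)_{k\in N}$ with
\[
A_k^r\subseteq \mathcal{B}^q_{k,r}\Big(H\cap\bigcap_{j\in N_k}A_j^r\Big),\qquad \Pr(A_i^r\mid H)\to 1 .
\]
Indeed, an induction on $n$ gives $A_k^r\subseteq C^{q,n}_{k,r}$ for all $n$, since $\birq$ is monotone. Hence $A_k^r\subseteq C^q_{k,r}$.

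\textbf{Construction.} Choose a threshold $p'$ with $1/2<p'<p$, and a small $\eps>0$. Let $\delta=\delta(\eps)$ be given by the large information boundaries condition. Define $A_k^r$ to be the event that every test region $Q\in\cQ_k^r$ with $|Q|\geq \eps|B_r(k)|$ has fraction of $1$-signals at least $p'$. Every such $Q$ contains at least $\eps|B_r(k)|\gtrsim \eps r$ agents, because balls grow at least linearly in a connected infinite graph. By Hoeffding, each such $Q$ fails under $H$ with probability at most $e^{-c\eps r}$. A union bound over $|\cQ_k^r|=e^{o(r)}$ regions (subexponential-$\cQ$) then gives $\Pr(A_k^r\mid H)\to1$. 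On $A_k^r$ the test region $B_r(k)$ itself has a fraction at least $p'>1/2$ of ones among $|B_r(k)|\to\infty$ signals. Therefore $\Pr(H\mid I_k^r)\geq 1-e^{-c|B_r(k)|}$ uniformly on $A_k^r$.

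\textbf{Self-sustaining step.} Fix $x\in A_k^r$ and a neighbor $j$. I need to bound $\Pr((A_j^r)^c\mid I_k^r=x,H)$; it suffices to show this is $o(1/d_k)$ uniformly, since $d_k$ is fixed as $r$ grows. The event $(A_j^r)^c$ requires some $Q'\in\cQ_j^r$ with $|Q'|\geq\eps|B_r(j)|$ and fraction below $p'$. Split $Q'$ into two parts:
\begin{itemize}
\item $Q=Q'\cap B_r(k)$, which is a test region in $\cQ_k^r$ and is observed by $k$;
\item $Q'\setminus B_r(k)$, which is the $(j,k)$-information boundary of $Q'$. By the large information boundaries condition applied to the pair $(j,k)$, this part is either empty or has size at least $\delta r$.
\end{itemize}
The signals in $Q'\setminus B_r(k)$ are i.i.d.\ Bernoulli$(p)$ conditional on $H$ and independent of $x$.

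If the boundary is empty, then $Q'=Q\in\cQ_k^r$. By similar neighborhood sizes, $|Q|\geq(\eps/K)|B_r(k)|$. So $Q$ is certified good by $x$, provided $A_k^r$ was defined with $\eps/K$ in place of $\eps$.

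If the boundary is nonempty, I want to use two facts together. First, the observed part $Q$ has a good fraction whenever it is large relative to $B_r(k)$. Second, the unobserved part has fraction near $p>p'$ except with probability $e^{-c\delta r}$. Any combination of the two parts then has fraction above $p'$ up to a small slack.

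\textbf{Main obstacle.} The observed part $Q$ may be too small to be covered by $A_k^r$. The thresholds also lose slack each time a region is split. I would handle this with a graded definition: on $A_k^r$, require that every test region of size at least $\eps|B_r(k)|$ has at least $p'|Q|-\eta|B_r(k)|$ ones, with $\eta$ small. This additive slack is stable under splitting $Q'$ into its observed part and its boundary. It also lets small observed parts be absorbed into the slack. The unobserved boundary, of size at least $\delta r$, concentrates with error $o(|B_r(k)|)$ with high probability. The error terms must be balanced so that the neighbor's condition is implied with the same parameters, and this is where I expect the most care to be needed. It would also suffice to let the neighbor's slack be any fixed constant that still yields individual learning.

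Once that is done, a union bound over the $e^{o(r)}$ regions $Q'\in\cQ_j^r$ gives
\[
\Pr\Big(H\cap\bigcap_{j\in N_k}A_j^r\;\Big|\;I_k^r=x\Big)\geq 1-o(1)\geq q
\]
for all large $r$. This verifies the self-sustaining condition, and hence $\lim_{r\to\infty}\Pr(\cir\mid H)=1$ for all $q$.
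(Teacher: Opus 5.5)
\textbf{There is a genuine gap at the crux.} Your architecture is the paper's. You use events requiring many high signals in every test region. You split a neighbor's region $Q'\in\cQ_j^r$ into $T=Q'\cap B_r(k)\in\cQ_k^r$ and the $(j,k)$-information boundary $F=Q'\setminus B_r(k)$. You then apply Chernoff on $F$ and a union bound over subexponentially many regions. But the step you call the main obstacle is the crux of the theorem, and you leave it open. Worse, the graded definition you propose does not close it.

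Self-sustainment is a fixed-point condition on a \emph{single} family, so $A_k^r$ must reproduce $A_j^r$ with exactly the same parameters. Both your cutoff $\eps|B_r(k)|$ and your slack $\eta|B_r(k)|$ depend on the agent, and that breaks the argument.
\begin{itemize}
\item Suppose $|B_r(j)|<|B_r(k)|$ and $Q'\subseteq B_r(k)$, so the boundary is empty. Then $A_k^r$ only certifies $S(Q')\geq p'|Q'|-\eta|B_r(k)|$, which is strictly weaker than what $A_j^r$ demands. Since $Q'$ is fully observed by $k$, any $x\in A_k^r$ putting $S(Q')$ in that window makes $A_j^r$ fail with conditional probability one.
\item Lowering $k$'s cutoff to $(\eps/K)|B_r(k)|$ only lowers $j$'s cutoff too, since it is the same family. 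Then $k$ must certify down to $(\eps/K^2)|B_r(k)|$, and so on without end.
\item Your fallback of letting the neighbor's slack be a different constant fails for the same reason: any per-step loss compounds along unbounded chains of neighbors.
\end{itemize}
Your version does work when all balls have equal size, as on the grid, provided $p'\eps\leq\eta$. It does not work under the theorem's hypotheses.

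\textbf{The missing idea is an agent-independent slack.} Let $M_r=\sup_j|B_r(j)|$, which is finite by similar neighborhood sizes, and let $\eta=p-\eps$. Require $S(Q)\geq\eta|Q|-\eps M_r$ for \emph{every} $Q\in\cQ_k^r$. Small regions satisfy this trivially, so the effective cutoff $\eps M_r/\eta$ is the same for all agents. The slack is then reproduced exactly under splitting, because the boundary needs no slack at all. Indeed, $S(T)\geq\eta|T|-\eps M_r$ (guaranteed by $x$) together with $S(F)\geq\eta|F|$ gives $S(Q')\geq\eta|Q'|-\eps M_r$. Hence a violation at $Q'$ forces three things:
\begin{itemize}
\item $S(F)<\eta|F|$;
\item $|Q'|>\eps M_r/\eta\geq\eps|B_r(j)|$, so the large-boundaries condition applies;
\item $F\neq\emptyset$, so $|F|\geq\delta r$.
\end{itemize}
The failure probability is therefore at most $e^{-D(\eta\|p)\delta r}$. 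It is the gap $p-\eta$, not an $o(|B_r(k)|)$ error term, that beats the $e^{o(r)}$ union bound. Individual learning on the event then needs $\eps<(p-1/2)/(K+1)$.

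\textbf{Uniformity in the agent.} The inclusion must hold for all agents simultaneously at a given $r$, so your bounds must be uniform in $k$. This does not follow from $d_k$ being fixed as $r$ grows. It comes from bounded degrees (similar neighborhood sizes at $r=1$) and the $\sup_i$ in the subexponential-$\cQ$ condition.
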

The intuition behind the proof of this result is the following. Large information boundaries imply that when agent $j$ reasons about his neighbor $i$'s belief, and in particular when he reasons about the signals $i$ observed in a test region $Q$, a large information boundary allows for concentration of measure to reduce the uncertainty $j$ has about the aggregate signal delivered by the unobserved signals.

In particular, to show that network $q$-belief is obtained, we consider the event that within the ball $B_r(i)$ observed by $i$, not only are there many high signals (indicating that the state is most likely high) but moreover in each test region $Q \in \cQ^r_i$ separately there are many high signals. When the collection of test regions $\cQ^r_i$ is small, we show that this happens with high probability. More importantly, this collection of events is self-sustaining, leading to network $q$-belief. We explain this construction in further detail here, and defer the full proof to the appendix.

Let $M_r = \sup_j |B_r(j)|$ be the size of the largest neighborhood of radius $r$. The similar neighborhood sizes property ensures that this is finite for each $r$. Given a finite subset $Q$ of the agents, let $S(Q) = \sum_{j \in Q}s_j$ be the number of high signals observed by the agents in $Q$. Choosing $\eps>0$ small enough, the self-sustaining events $Y^r_i$ are given by
\begin{align}\label{eq:self-sustaining-events}
Y^r_i = \left\{ S(Q) \geq (p-\eps)|Q|-\eps M_r
 \quad \forall Q\in \mathcal{Q}_i^r
\right\}. 
\end{align}
That is, there are sufficiently many high signals in every test region observed by $i$.


Note that in the high state, the expectation of $S(Q)$ is $p|Q|$, which is larger than $(p-\eps)|Q|$, and so for large $Q$ this condition holds with high probability. Note also that if $Q$ is small---in particular smaller than $\eps M_r$---then this condition holds trivially, since the right hand side of the inequality is negative. Using the fact that the size of the collection $\cQ^r_i$ is small, we show that $Y^r_i$ has high probability. 

The crux of the proof is to show that the collection $Y^r_i$ is self-sustaining, i.e., that 
$$Y_i^r \subseteq \birq \left (H\cap \left(\bigcap_{j\in N_i} Y_j^r\right)\right).$$
For this, we use the assumption that $G$ has large information boundaries.

\section{Coordination Failure on  Networks with Bottlenecks}

In this section we study graphs on which the only equilibrium is the one in which agents always play the safe action, for any radius of observation. We show that a geometric condition on the graph is sufficient to guarantee this: the graph must contain a long path along which information boundaries are of uniformly bounded size. On such graphs, network common learning is not attained.

\begin{definition}
    Fix $r,m > 0$. A path of distinct agents $i_1, i_2,  \ldots, i_T$ in the graph $G$ is an $(r, m)$-bottleneck if 
    \begin{enumerate}
       \item $|B_r(i_{t+1})\backslash B_r(i_t)|\leq m$ for all $t < T$, i.e., agent $i_{t+1}$ has no more than $m$ agents in their $r$-neighborhood that are not in agent $i_t$'s $r$-neighborhood.
       \item $B_r(i_1) \cap B_r(i_T) = \emptyset$, i.e., the first and last agent's $r$-neighborhoods are disjoint.
    \end{enumerate}
\end{definition}

The first condition in the definition is that the information boundary  $B_r(i_{t+1}) \setminus B_r(i_t)$ is of size at most $m$, i.e., that along the path, $r$-neighborhoods are highly overlapping, differing by at most $m$ agents. The second condition ensures that the path is long enough so that all signals observed by the first agents are not observed by the last; equivalently, that $i_1$ and $i_T$ are more than distance $2r$ apart. A simple example is the line graph, which contains an $(r,1)$-bottleneck for all $r$. 


Our main result in this section is that on graphs with bottlenecks coordination is impossible. 
\begin{theorem}\label{thm:path-impossibility} Suppose that $G$ contains an $(r, m)$-bottleneck. If $q > 1- (1-p)^m$, then $\Pr(C^q_{i,r})=0$ for all $i$ (i.e., network $q$-belief does not occur) and thus in the unique equilibrium all agents choose the safe action. 
\end{theorem}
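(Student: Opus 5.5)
The plan is a contagion argument along the bottleneck: nonemptiness of network $q$-belief at the start of the path forces network $q$-belief at its end on an all-low observation, which is absurd. Fix the $(r,m)$-bottleneck $i_1,\dots,i_T$ and write $C_t = C^q_{i_t,r}$. Each $C^q_{i,r}$ is an intersection of events measurable with respect to $I_i^r$, so it is $\sigma(I_i^r)$-measurable. It is therefore a union of atoms $\{I_i^r=x\}$, each of positive probability. Hence $\Pr(C^q_{i,r})=0$ if and only if $C^q_{i,r}=\emptyset$, and I will work with nonemptiness.

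\emph{Step 1 (one-step propagation).} Suppose $x\in O^r_{i_t}$ is an observation with $\{I^r_{i_t}=x\}\subseteq C_t$. Let $D=B_r(i_{t+1})\setminus B_r(i_t)$, so $|D|\le m$.
\begin{itemize}
\item By Lemma~\ref{lem:fixedpoint}, since $i_{t+1}\in N_{i_t}$, we get $\Pr(H\cap C_{t+1}\mid I^r_{i_t}=x)\ge q>1-(1-p)^m$.
\item The signals in $D$ are conditionally independent of $I^r_{i_t}$ given $\theta$, and each equals $0$ with conditional probability at least $1-p$. Hence $\Pr(s_k=0\ \forall k\in D\mid I^r_{i_t}=x)\ge (1-p)^m$.
\item Combining the two bounds, $\Pr(C_{t+1}\cap\{s_k=0\ \forall k\in D\}\cap\{I^r_{i_t}=x\})>0$.
\end{itemize}
Since $C_{t+1}$ is a union of atoms of $I^r_{i_{t+1}}$, this yields an observation $y\in C_{t+1}$ with two properties: $y$ agrees with $x$ on $B_r(i_t)\cap B_r(i_{t+1})$, and $y$ is identically $0$ on $D$.

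\emph{Step 2 (iterating along the path).} Suppose $C_1\neq\emptyset$, and pick $x_1\in C_1$. Applying Step 1 repeatedly gives observations $x_t\in C_t$ for $t=1,\dots,T$. Consecutive observations agree on overlaps, and each step writes $0$ on the newly entering agents. Now take any $k\in B_r(i_T)$. By the disjointness condition, $k\notin B_r(i_1)$. Let $s$ be the last index at which $k$ enters, i.e.\ $k\in B_r(i_{s'})$ for all $s'\ge s$ and $k\notin B_r(i_{s-1})$. Then $x_s(k)=0$, and the agreement-on-overlaps property carries this value through to $x_T(k)=0$. Hence the all-zero observation $\mathbf 0$ on $B_r(i_T)$ lies in $C_T$.

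\emph{Step 3 (contradiction).} We have $C_T\subseteq C^{q,1}_{i_T,r}=\mathcal B^q_{i_T,r}(H)$. Given $I^r_{i_T}=\mathbf 0$, with $n=|B_r(i_T)|\ge1$ and a uniform prior, the posterior is $\Pr(H\mid\mathbf 0)=(1-p)^n/((1-p)^n+p^n)\le 1-p<1/2$. On the other hand $q>1-(1-p)^m\ge 1-(1-p)=p>1/2$. So $\mathbf 0\notin C_T$, a contradiction, and therefore $C_1=\emptyset$.

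\emph{Step 4 (all agents, and the equilibrium).} I claim that if $C^q_{i,r}\neq\emptyset$ then $C^q_{j,r}\neq\emptyset$ for every neighbor $j$. Indeed, Lemma~\ref{lem:fixedpoint} gives positive conditional probability of $C^q_{j,r}$ on some atom of $I^r_i$. Since $G$ is connected, nonemptiness at any agent would then propagate to $i_1$, contradicting Step 3. So $\Pr(C^q_{i,r})=0$ for all $i$. By Proposition~\ref{prop:maximal_eqm}, in every equilibrium $a_i=R$ only on $C^q_{i,r}=\emptyset$, so all agents play $S$. Conversely, all-safe is an equilibrium, since $\mathcal B^q_{i,r}(\emptyset)=\emptyset$ for $q>0$.

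The main obstacle is the bookkeeping in Step 2. The path need not move monotonically away from $i_1$, so an agent may leave and re-enter the balls along the path. This is handled by tracking the \emph{last} entry time of each $k\in B_r(i_T)$ and using disjointness to guarantee that such an entry occurs after $t=1$.
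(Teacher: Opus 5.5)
Your proof is correct and is essentially the paper's argument run in contrapositive form. The paper fixes a single global configuration that agrees with $x_1$ on $B_r(i_1)$ and with $\mathbf 0$ on $B_r(i_T)$, and propagates $x_{t+1}\notin C^q_{i_{t+1},r}\Rightarrow x_t\notin C^q_{i_t,r}$ backward using the same $(1-p)^m$ bound; it then spreads emptiness through the fixed-point lemma and connectivity and concludes with Proposition~\ref{prop:maximal_eqm}. Your greedy forward construction of the chain is what forces the last-entry bookkeeping, which the paper avoids by choosing one global configuration at the outset.
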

The proof of Theorem~\ref{thm:path-impossibility} appears in the appendix. The idea behind it is the following argument, which is similar to infection arguments used in the literature \citep[e.g.,][]{carlsson1993global}.

\begin{figure}[t]
\centering

\begin{tikzpicture}[
    x=0.8cm,
    y=0.8cm,
    vertex/.style={circle,fill=black,inner sep=1.5pt},
    bigvertex/.style={circle,fill=black,inner sep=2.3pt},
    font=\scriptsize
]

\draw[thick] (0,0) -- (17,0);

\foreach \x in {0.4,0.8,...,16.8} {
  \node[vertex] at (\x,0) {};
}

\node[draw=none,fill=none] at (8.4,0) {$\cdots$};

\coordinate (i)  at (3.2,0);
\coordinate (j)  at (13.6,0);

\node[bigvertex] at (i)  {};
\node[bigvertex] at (j)  {};

\node[below=6pt] at (i)  {$i$};
\node[below=6pt] at (j)  {$j$};

\draw[red,line width=1.1pt]
  (1.35,0.58) -- (5.05,0.58);
\draw[red,line width=1.1pt]
  (1.35,0.58) -- (1.35,0.16);
\draw[red,line width=1.1pt]
  (5.05,0.58) -- (5.05,0.16);

\node[red] at (5.05,0.84) {$x_i$};

\foreach \x/\t in {
  1.6/0, 2.0/1, 2.4/1,
  2.8/0, 3.2/1, 3.6/0, 4.0/1, 4.4/1, 4.8/0
} {
  \node[red] at (\x,0.3) {$\t$};
}

\draw[red,line width=1.1pt]
  (11.40,0.68) -- (15.00,0.68);
\draw[red,line width=1.1pt]
  (11.40,0.68) -- (11.40,0.26);
\draw[red,line width=1.1pt]
  (15.00,0.68) -- (15.00,0.26);

\node[red] at (11.40,0.94) {$x_{j-1}$};

\draw[red,line width=1.1pt]
  (11.80,0.58) -- (15.40,0.58);
\draw[red,line width=1.1pt]
  (11.80,0.58) -- (11.80,0.16);
\draw[red,line width=1.1pt]
  (15.40,0.58) -- (15.40,0.16);

\node[red] at (15.50,0.84) {$x_j$};

\foreach \x in {12.0,12.4,12.8,13.2,13.6,14.0,14.4,14.8,15.2} {
  \node[red] at (\x,0.30) {$0$};
}
\node[red] at (11.6,0.30) {$1$};


\end{tikzpicture}
\caption{Proof sketch of Theorem~\ref{thm:path-impossibility}. The observation $x_i$ is arbitrary, while $x_j$ is fixed to induce a low belief. Conditioned on $j-1$ observing $x_{j-1}$, there is sufficiently high probability that $j$ observed $x_j$, ruling out network $q$-belief by $j-1$. By induction, this rules out network $q$-belief by $i$, for arbitrary $x_i$.\label{fig:sketch}}
\end{figure}

For this proof sketch, consider the one-sided line graph, i.e., $N = \mathbb{N}$ and $i,j \in N$ are neighbors if $|i-j|=1$. Fix an $r>0$ and an agent $i$, and let $x_i \in O_i^r$ be an arbitrary observation by agent $i$. To prove the claim, we show that after observing $x_i$, agent $i$ does not have network $q$-belief, and therefore must take the safe action. 

To this end, consider some agent $j$ who is far enough from agent $i$ so that their observation neighborhoods are disjoint: $B_r(i) \cap B_r(j) = \emptyset$ ($j=i+2r+1$ suffices). Let $x_j \in O_j^r$ be any observation by $j$ which implies that $j$ does not have network $q$-belief. For example, this holds for any $x_j$ which induces a posterior belief that is lower than $q$:  $\Pr(\theta=H|I_j^r=x_j) < q$. Fix any signal realizations for the agents $i+1,i+2,\ldots,j-1$ that are compatible with $x_i$ and $x_j$, and denote the observations by these agents by $x_{i+1},\ldots,x_{j-1}$; see Figure~\ref{fig:sketch}. 

The key observation is that because $x_j$ does not induce network $q$-belief for agent $j$, from agent $j-1$'s point of view, this $(I_j^r=x_j)$ happens with probability at least $1-p$: only one signal observed by $j$ is not observed by $j-1$, and so the probability that $I_j^r=x_j$ conditioned on $I_{j-1}^r=x_{j-1}$ is bounded away from zero, regardless of how large $r$ is. It follows that $x_{j-1}$ does not induce network $q$-belief in $j-1$. By induction, $x_i$ does not induce network $q$-belief in agent $i$. Since $i$ and $x_i$ are arbitrary, this completes the proof.

\bigskip

As mentioned above, the line graph has a uniform family of bottlenecks, i.e., an $(r,1)$-bottleneck for all $r$. An immediate corollary of Theorem~\ref{thm:path-impossibility} is that the existence of such a uniform family excludes network common learning.
\begin{corollary}
\label{cor:impossibility-path}
    Suppose that there exists an  $m > 0$ such that for every $r > 0$ the graph contains an $(r, m)$-bottleneck. Then network common learning is not obtained.   
\end{corollary}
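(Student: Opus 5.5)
Corollary~\ref{cor:impossibility-path} follows directly from Theorem~\ref{thm:path-impossibility}; the only work is choosing a single $q$ that works for all radii at once. Fix the $m$ from the hypothesis and pick any $q \in (0,1)$ with $q > 1-(1-p)^m$. This is possible because $p<1$ gives $(1-p)^m>0$, so $1-(1-p)^m<1$. The threshold $1-(1-p)^m$ depends only on $p$ and $m$, not on $r$, so one $q$ serves every radius.

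Now fix an arbitrary agent $i$ and any $r>0$. By hypothesis $G$ contains an $(r,m)$-bottleneck. Since $q > 1-(1-p)^m$, Theorem~\ref{thm:path-impossibility} applies and gives $\Pr(C^q_{i,r})=0$. As $\Pr(\theta=H)=1/2>0$,
\[
\Pr(C^q_{i,r}\mid \theta=H) \le \frac{\Pr(C^q_{i,r})}{\Pr(\theta=H)} = 0 .
\]
This holds for every $r>0$, so $\lim_{r\to\infty}\Pr(C^q_{i,r}\mid\theta=H)=0\neq 1$.

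Hence network $q$-learning of $H$ fails for this $q\in(0,1)$, even at a single agent. Since network common learning requires network $q$-learning for every $q\in(0,1)$, it is not obtained. There is no real obstacle here. The only points to check are that the threshold on $q$ is uniform in $r$ and that it lies strictly below $1$, and both are immediate from the choice of $q$ above.
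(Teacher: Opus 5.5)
Your proof is correct, and it is the argument the paper has in mind when it calls this an immediate corollary of Theorem~\ref{thm:path-impossibility}: you fix one $q\in(0,1)$ above the $r$-independent threshold $1-(1-p)^m$, and the theorem then forces $\Pr(C^q_{i,r}\mid\theta=H)=0$ for every $r$. Your appeal to $p<1$ is legitimate, since the paper assumes it implicitly (for instance, when it asserts that every atom of $\sigma(I_i^r)$ has positive probability), and without it the threshold would equal $1$ and the corollary would fail.
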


\section{Conclusion}

We study global games, played locally on a social network. Agents receive a large amount of information about the state by learning the private signals of agents in their $r$-neighborhood. As in the classic literature, we ask whether higher-order beliefs become aligned. In particular, we study network common learning, the network analogue of common learning. The answer in our case is that it depends non-trivially on the network geometry.

We do not provide a full characterization of graphs on which network common learning is obtained, instead providing separate necessary and sufficient conditions. A major point for future research is to close this gap. A natural question is whether network common learning is a \emph{quasi-isometry} invariant. That is, is network common learning robust to local changes in the graph that do not change the global geometry?\footnote{Two connected graphs $G_1=(V_1,E_1)$ and $G_2=(V_2,E_2)$ are isometric (or isomorphic) if there is a bijection $\varphi \colon V_1 \to V_2$ such that $d_1(i,j) =d_2(\varphi(i),\varphi(j))$,
where $d_i$ denote graph distances. They are quasi-isometric if there is a map $\varphi \colon V_1 \to V_2$ and constants $A \geq 1$, $B\geq 0$ such that for every $i,j \in V_1$ 
\begin{align*}
    \frac{1}{A}d_1(i,j)-B \leq d_2(\varphi(i),\varphi(j)) \leq A d_1(i,j)+B,
\end{align*}
and every element of $V_2$ is of distance at most $B$ from the image of $\varphi$.}

Our model focuses on a binary state and binary signals. We believe that our network common learning results extend to any finite number of states and finite number of signals. The case of infinitely many signals is more mysterious, as it is in the context of common learning.

\newpage

\appendix
\setcounter{lemma}{0} 
\renewcommand{\thelemma}{\thesection.\arabic{lemma}}
\setcounter{claim}{0} 
\renewcommand{\theclaim}{\thesection.\arabic{claim}}

\section{Proofs}
\subsection{Proof of Proposition \ref{prop:maximal_eqm}}
\label{pf:maximal_eqm}
\begin{proof}
Firstly, note that by Lemma \ref{lem:fixedpoint} the family $(\cir)_{i\in N}$ satisfies the equilibrium condition \eqref{eq:eqm}. Thus, it induces an equilibrium. 

Next, we prove by mathematical induction that for every agent $i$,  $X_i^r \subseteq C_{i, r}^{q,n}$ for every $n$ and thus $X_i^r \subseteq \bigcap_{n \geq 0} C^{q,n}_{i,r}  = \cir$.

For $n=0$, $X_i^r \subseteq \Omega = C^{q,0}_{i,r}$. Now suppose that for every agent $j$, $X_j^r \subseteq C_{j, r}^{k}$ for $k \leq (n -1)$. Then, $H\cap \left(\bigcap_{j\in N_i}X_{j}^r\right) \subseteq H\cap \left(\bigcap_{j\in N_i}C_{j,r}^{n-1}\right)$. Since the belief operator is monotone, 
$$X_i^r = \birq\left (H\cap \left(\bigcap_{j\in N_i}X_{j}^r\right)\right) \subseteq \birq\left(H\cap \left(\bigcap_{j\in N_i}C_{j,r}^{q,n-1}\right)\right) = C^{q,n}_{i,r}.$$

Finally, consider any mixed strategy $\sigma_i : I_i^r \to [0,1]$ with $\sigma_i(x)$ denoting the probability of choosing action $R$ on observing $x$ and denote $Y=\{x \in O_i^r: \sigma(x) >0\} $ as the observation on which action $R$ is played with positive probability, we know $Y \subseteq \birq\left (H\cap \left(\bigcap_{j\in N_i}X_{j}^r\right)\right) $ and the proof for the case of mixed strategy is essentially the same as above.
\end{proof}

\subsection{Proof of Theorem~\ref{thm:common-learning}}

To prove Theorem~\ref{thm:common-learning} we will need a lemma and a number of claims. We start with the following general lemma.
\begin{lemma}
    \label{lemma:self-sustaining_common_beliefs}

    Consider a family $(Y_i^r)_{i\in N}$ which is self-sustaining, i.e. for every $i$,
    $$Y_i^r \subseteq \birq \left (H\cap \left(\bigcap_{j\in N_i} Y_j^r\right)\right).$$
    Then, $Y_i^r \subseteq \cir$ for every $i$.
\end{lemma}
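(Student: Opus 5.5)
The plan is an induction on the order of belief $n$, in the same way the proof of Proposition~\ref{prop:maximal_eqm} handles equilibrium risky sets. The only ingredient is the monotonicity of the belief operator: if $E \subseteq F$, then $\Pr(E|I_i^r) \le \Pr(F|I_i^r)$ pointwise, using the canonical version of conditional probability on the finite partition generated by $I_i^r$. Hence $\birq(E) \subseteq \birq(F)$.

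We show by induction on $n$ that $Y_i^r \subseteq C^{q,n}_{i,r}$ for all $i\in N$ simultaneously.
\begin{itemize}
\item \textbf{Base case $n=0$.} This is trivial, since $C^{q,0}_{i,r}=\Omega$.
\item \textbf{Inductive step.} Assume $Y_j^r \subseteq C^{q,n-1}_{j,r}$ for every $j$. Then
\[
H\cap \bigcap_{j\in N_i} Y_j^r \;\subseteq\; H\cap \bigcap_{j\in N_i} C^{q,n-1}_{j,r}.
\]
By self-sustainingness and monotonicity,
\[
Y_i^r \subseteq \birq\Bigl(H\cap \bigcap_{j\in N_i} Y_j^r\Bigr) \subseteq \birq\Bigl(H\cap \bigcap_{j\in N_i} C^{q,n-1}_{j,r}\Bigr) = C^{q,n}_{i,r}.
\]
\end{itemize}
Intersecting over all $n\ge 0$ gives $Y_i^r \subseteq \bigcap_n C^{q,n}_{i,r} = \cir$.

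There is no real obstacle. The one point of care is that the inductive hypothesis must be stated for all agents at once, because step $n$ for agent $i$ uses step $n-1$ for each of $i$'s neighbors. No measurability issue arises, because the inclusion $Y_i^r \subseteq \birq(\cdot)$ is assumed as sets. Lemma~\ref{lem:fixedpoint} is not needed.
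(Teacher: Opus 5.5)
Your proof is correct and follows essentially the same route as the paper's. Both argue by induction on $n$, simultaneously for all agents, using $C^{q,0}_{i,r}=\Omega$, the self-sustaining inclusion and monotonicity of $\birq$, and then intersect over $n$ to get $Y_i^r\subseteq\cir$.
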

\begin{proof}
    Similar to the proof of Proposition~$\ref{prop:maximal_eqm}$,  we will prove by induction that for every agent $i$,  $Y_i^r \subseteq C_{i, r}^{q,n}$ for every $n$ and thus $Y_i^r \subseteq \bigcap_{n \geq 0} C^{q,n}_{i,r}  = \cir$.

For $n=0$, $Y_i^r \subseteq \Omega = C^{q,0}_{i,r}$. Now suppose that for every agent $j$, $Y_j^r \subseteq C_{j, r}^{k}$ for $k \leq (n -1)$. Then, $H\cap \left(\bigcap_{j\in N_i}Y_{j}^r\right) \subseteq H\cap \left(\bigcap_{j\in N_i}C_{j,r}^{n-1}\right)$. Since the belief operator is monotone and  $(Y_i^r)_{i\in N}$ are self-sustaining, 
$$Y_i^r \subseteq \birq\left (H\cap \left(\bigcap_{j\in N_i}Y_{j}^r\right)\right) \subseteq \birq\left(H\cap \left(\bigcap_{j\in N_i}C_{j,r}^{q,n-1}\right)\right) = C^{q,n}_{i,r}.$$ 
It follows that $Y_i^r \subseteq C^{q,n}_{i,r}$ for every $n$ and hence $Y_i^r\subseteq\cir$.
\end{proof}

Suppose $G$ has the similar neighborhood sizes property with constant $K$, the large information boundaries property, and subexponential-$\cQ$. Choose $\eps < (p-1/2)/(K+1)$ and $Y^r_i$ as in  \eqref{eq:self-sustaining-events}, i.e. 
$$
    Y^r_i = \left\{ S(Q) \geq (p-\eps)|Q|-\eps(\sup_j|B_r(j)|)
 \quad \forall Q\in \mathcal{Q}_i^r
\right\}.
$$ 
Since the event $Y^r_i$ is measurable with respect to $\sigma(I_i^r)$, the information available to agent $i$ with radius of observation $r$, we can identify it with a subset of $O_i^r = \{0,1\}^{B_r(i)}$, which we also denote by $Y^r_i$.

Denote $M_r = \sup_j |B_r(j)|$ and $\eta = p-\eps$, so that
    $$
        Y^r_i = \left\{ S(Q) \geq \eta|Q|-\eps M_r
 \quad \forall Q\in \mathcal{Q}_i^r
\right\}.
    $$

We first note that the set $Y^r_i \subseteq O_i^r$ is non-empty. To see this, consider the configuration $x \in O_i^r$ in which all signals are high. Then, for every $Q\in \mathcal{Q}_i^r$, we have $$S(Q) = |Q|\geq\eta|Q| - \varepsilon M_r,$$
    implying that $x\in Y^r_i$.

The next three claims establish  properties of $Y^r_i$ which are needed to show that it is self-sustaining. The first shows that on $Y_i^r$, agents learn the state, and furthermore learning is uniform over the agents and the elements of $Y_i^r$.
\begin{claim}
\label{clm:Y-1}
$$
  \lim_{r \to \infty}\inf_{i\in N}\inf_{x \in Y^r_i}\Pr(\theta = H \mid I_i^r=x)= 1.
$$
    
\end{claim}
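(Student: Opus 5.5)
The plan is to apply the defining inequality of $Y_i^r$ to the single test region $Q = B_r(i)$, which is itself an element of $\cQ_i^r$ (take $m=1$, $i_1=i$). Then compute the posterior explicitly. With a uniform prior and conditionally i.i.d.\ symmetric binary signals, the posterior log-likelihood ratio given $I_i^r = x$ depends only on the number of high signals $S = S(B_r(i))$ and the neighborhood size $n = |B_r(i)|$. Explicitly,
$$\log\frac{\Pr(\theta=H\mid I_i^r=x)}{\Pr(\theta=L\mid I_i^r=x)} = \bigl(2S - n\bigr)\log\frac{p}{1-p}.$$
So it suffices to show that $2S-n$ tends to $+\infty$ uniformly over $i$ and over $x\in Y_i^r$.

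On $Y_i^r$ we have $S \geq (p-\eps)n - \eps M_r$. By the similar neighborhood sizes condition, $M_r = \sup_j |B_r(j)| \leq K\,|B_r(i)| = Kn$. Hence
$$S \geq (p-\eps)n - \eps K n = \bigl(p - \eps(K+1)\bigr) n.$$
The choice $\eps < (p-1/2)/(K+1)$ makes $c := p - \eps(K+1) - 1/2 > 0$. This gives $2S - n \geq 2c\,n = 2c\,|B_r(i)|$, and the bound holds for every $x \in Y_i^r$ and every $i$.

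It remains to show that $\inf_i |B_r(i)| \to \infty$ as $r\to\infty$. Since $G$ is infinite and connected, each ball $B_r(i)$ contains a geodesic path of length $r$ from $i$, so $|B_r(i)| \geq r+1$ for every $i$. (Alternatively, $|B_r(i)| \geq |B_r(j)|/K$ for a fixed $j$.) Consequently the log-likelihood ratio is at least $2c(r+1)\log\frac{p}{1-p}$ uniformly in $i$ and $x$, and the posterior tends to $1$ uniformly, which proves the claim.

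The only real obstacle is absorbing the additive slack $\eps M_r$. It is controlled by the similar neighborhood sizes assumption together with the specific choice of $\eps$, so no concentration-of-measure argument is needed for this step.
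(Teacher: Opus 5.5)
Your proposal is correct and follows essentially the same route as the paper: apply the threshold to $Q=B_r(i)$, absorb $\eps M_r$ using $M_r\le K|B_r(i)|$, and bound the odds ratio by $\bigl(p/(1-p)\bigr)^{(2p-2(K+1)\eps-1)|B_r(i)|}$. Your explicit bound $|B_r(i)|\ge r+1$ makes the uniformity in $i$ more transparent than the paper's brief appeal to similar neighborhood sizes.
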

\begin{proof}
    Consider $Q = B_r(i) \in \mathcal{Q}_i^r$. Recall from the definition of similar neighborhood sizes, there exists a constant $K$ such that for all $i, j$ and $r$ we have $|B_r(i)|/|B_r(j)|\leq K$. In particular,  $K\geq M_r/|B_r(i)|$. Then, for every $x\in Y^r_i$, we have 
    $$S(B_r(i))\geq \eta|B_r(i)| - \varepsilon M_r \geq \eta|B_r(i)| - \varepsilon K |B_r(i)|=(p-(K+1)\varepsilon)|B_r(i)|.$$
    
    Thus, for every $x \in Y_i^r$, 
    $$\frac{\Pr(\theta = H|I_i^r = x)}{\Pr(\theta = L|I_i^r = x)} = \left(\frac{p}{1-p}\right)^{2S(B_r(i)) -|B_r(i)|} \geq
\left(\frac{p}{1-p}\right)^{(2p -2(K+1)\varepsilon - 1) |B_r(i)|}.$$ 

Note that $2p -2(K+1)\varepsilon - 1 >0$ because $\varepsilon < (p-1/2)/(K+1)$, $|B_r(i)|$ is increasing in $r$ (uniformly over $i$, by  similar neighborhood sizes) and $p>1/2$. Thus, the odds ratio goes to infinity as $r$ increases. Hence, the posterior probability of the high state goes to $1$, uniformly over $i$ and $x$:
$$\lim_{r \to \infty}\inf_{i\in N}\inf_{x\in Y_i^r}\Pr(\theta = H|I_i^r = x) =1.$$

\end{proof}

The next claim shows that on $Y_i^r$, agents place high belief in $Y_j^r$ (for a neighbor $j$), and that this is again uniform over the agents and over $Y_i^r$.
\begin{claim}
\label{clm:Y-2}
$$
    \lim_{r \to \infty}\inf_{i \in N}\inf_{x \in Y_i^r}\Pr(I_j^r \in Y^r_j, \forall j \in N_i \mid I_i^r=x, \theta=H) = 1.
$$
\end{claim}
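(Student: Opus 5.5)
The plan is to fix $i$, a neighbor $j\in N_i$ and an observation $x\in Y_i^r$, and to work conditionally on $\{I_i^r=x,\theta=H\}$. Under this conditioning, the signals $(s_k)_{k\notin B_r(i)}$ are i.i.d.\ Bernoulli$(p)$ and independent of $x$. I will bound the conditional probability that $I_j^r\notin Y_j^r$ by a union bound over the test regions $Q'\in\cQ_j^r$. I will then take a further union bound over the neighbors $j\in N_i$.

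The key step is a decomposition that reuses the slack $\eps M_r$ only once. Fix $Q'\in\cQ_j^r$ and write $Q'=A\cup D$ with $A=Q'\cap B_r(i)$ and $D=Q'\setminus B_r(i)$. Since $A$ is again a finite intersection of balls and $A\subseteq B_r(i)$, we have $A\in\cQ_i^r$. Because $x\in Y_i^r$, this gives $S(A)\geq \eta|A|-\eps M_r$ deterministically. Hence the inequality $S(Q')\geq\eta|Q'|-\eps M_r$ follows whenever $S(D)\geq \eta|D|$. Only this last event is random given $x$, and $S(D)$ is Binomial$(|D|,p)$ with mean $p|D|=(\eta+\eps)|D|$. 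I will dispose of three cases separately:
\begin{itemize}
\item If $|Q'|\leq \eps M_r$, the required inequality for $Q'$ holds trivially, since its right-hand side is nonpositive.
\item If $|Q'|>\eps M_r$, then $|Q'|\geq \eps|B_r(j)|$. Applying the $(\eps,\delta)$-large information boundaries property to the pair $(j,i)$ shows that $D=Q'\setminus B_r(i)$ is either empty or has $|D|\geq \delta r$. If $D$ is empty there is nothing to prove.
\item Otherwise $|D|\geq\delta r$, and Hoeffding's inequality gives $\Pr(S(D)<\eta|D|)\leq e^{-2\eps^2|D|}\leq e^{-2\eps^2\delta r}$.
\end{itemize}

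Summing these bounds, the conditional failure probability is at most $d_i\cdot\sup_{k}|\cQ_k^r|\cdot e^{-2\eps^2\delta r}$. This bound does not depend on $x$. To make it uniform in $i$, I need the degrees to be bounded. This follows from similar neighborhood sizes with $r=1$: for a fixed agent $k_0$, we have $d_i+1=|B_1(i)|\leq K|B_1(k_0)|<\infty$ for every $i$. Finally, subexponential-$\cQ$ gives $\sup_k|\cQ_k^r|\leq e^{\eps^2\delta r}$ for all large $r$. The whole bound is then at most $C e^{-\eps^2\delta r}\to 0$, which proves the claim.

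The main obstacle is the decomposition step. The definition of $Y_j^r$ allows only one additive slack $\eps M_r$, and that slack is entirely used up by the part of $Q'$ that $i$ observes. The unobserved part $D$ therefore has to clear the threshold $\eta|D|$ with no additive slack, using only the multiplicative margin $\eps|D|$. This margin is useful only when $D$ is empty or of linear size in $r$. That requirement is exactly why the large-boundary property is applied with the role of the two agents reversed, to the pair $(j,i)$, and why the case of small $Q'$ has to be handled separately, using the $\eps M_r$ slack together with similar neighborhood sizes.
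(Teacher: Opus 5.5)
Your proof is correct and follows essentially the same route as the paper. You split each $Q\in\cQ_j^r$ into $Q\cap B_r(i)\in\cQ_i^r$, whose threshold inequality is guaranteed by $x\in Y_i^r$, and the $(j,i)$-information boundary $Q\setminus B_r(i)$, which is empty or of size at least $\delta r$ once $|Q|>\eps M_r$; a concentration bound and a union bound controlled by subexponential-$\cQ$ then finish it. The differences are cosmetic: you use Hoeffding instead of the KL--Chernoff bound, and you make explicit the small-$Q$ case split, the membership $Q\cap B_r(i)\in\cQ_i^r$, and the uniform degree bound, all of which the paper leaves implicit.
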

\begin{proof}
  Since the number of $i$'s neighbors is finite (and uniformly bounded over $i$, by similar neighborhood sizes for $r=1$), it suffices to fix one neighbor $j\in N_i$ and show that $$
\lim_{r \to \infty}\sup_{i \in N}\sup_{x\in Y^r_i}\mathbb{P}(I_j^r\notin Y^r_j| I_i^r=x, \theta = H)= 0.
$$  
    Consider $j\in N_i$. For each $Q\in \mathcal{Q}_j^r$, let 
    $$T = Q\cap B_r(i), \quad F = Q \backslash B_r(i).$$
    Thus, $T$ is the part of $Q$ observed by both $i$ and $j$, while $F$ is the $(j, i)$-information boundary of $Q$, i.e. the part observed by $j$ but not by $i$. Moreover, $Q = T\cup F$ and $|Q| = |T| + |F|$.


To show the desired convergence, we estimate the conditional probability that $I_j^r\not \in Y^r_j$. For this to happen, some $Q\in \mathcal{Q}_j^r$ must violate the threshold property. Thus, $$ S(Q)  < \eta|Q| - \varepsilon M_r.$$ We claim that this forces $F$ to have too few high signals, namely $S(F) < \eta|F|$. Indeed, since $x \in Y^r_i$, the threshold inequality applied to $T$ gives $$S(T) \geq \eta|T| - \varepsilon M_r.$$

Thus, if $S(F) \geq \eta|F|$, we would have 
$$
S(Q) = S(T) + S(F) \geq \eta|T| - \varepsilon M_r + \eta|F| = \eta |Q| - \varepsilon M_r,
$$
contradicting the fact that $Q$ violates the threshold inequality. Hence the event $I_j^r\notin Y^r_j$ requires some $Q\in \mathcal{Q}_j^r$ with $S(F) < \eta|F|$.

Now we show that $F$ is sufficiently large. If $Q$ violates $j$'s threshold inequality, then $0\leq S(Q) <\eta|Q| - \varepsilon M_r$ and hence 
$$
|Q|>\frac{\varepsilon}{\eta}M_r\geq \varepsilon M_r\geq \varepsilon |B_r(j)|.
$$

Now, $F = Q\setminus B_r(i)$ is either empty or  of size at least $ \delta r$, where $\delta$ is given by the large information boundaries condition. Note that $F\neq \emptyset$. Otherwise, $Q = T$, and the threshold inequality for $x\in Y_i^r$ implies that $Q$ does not violate $j$'s inequality, i.e. $S(T) = S(Q) \geq \eta |Q| - \varepsilon M_r$. Therefore $|F|\geq \delta r$.

Next, we show that within each set $F$ whose size is linearly increasing in $r$, having too few signals is exponentially unlikely. Conditional on $\theta = H$, the high signals in $F$ are independent and occur with probability $p$, so $S(F) \sim \mathrm{Bin}(|F|, p)$. Since $\eta<p$, the Chernoff bound gives
$$
\mathbb{P}\left(S(F)<\eta|F|\right)\leq \exp(-D(\eta\|p)|F|)\le
\exp(-D(\eta\|p)\delta r),
$$
where $D(\cdot||\cdot)$ is the Kullback-Leibler divergence.


Hence, by a union bound, we get 
\begin{align*}
\mathbb{P}(I_j^r\notin Y_j^r| I_i^r=x, \theta = H)
&\le
\sum_{Q\in \cQ_j^r} \mathbb{P}\left(S(Q\setminus B_r(i))<\eta|Q\setminus B_r(i)|\right)\\
&\leq
|\cQ_j^r| \cdot \exp(-D(\eta\|p)\delta r)
\to 0 \quad \text{as } r\to \infty,
\end{align*}
where the convergence holds because the social network satisfies the subexponential-$\cQ$ condition, and furthermore holds uniformly over $i$.
   
\end{proof}

The last of our three claims shows that the probability of $Y_i^r$ converges to $1$ in the high state, and that this convergence is uniform over $i$.
\begin{claim}\label{clm:Y-3}
\begin{align*}
    \lim_{r \to \infty}\inf_{i \in N}\Pr(I_i^r\in Y_i^r|\theta = H) = 1.
\end{align*}
\end{claim}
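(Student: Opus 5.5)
The plan is a union bound over the test regions in $\cQ_i^r$, with a Chernoff bound for each region, closed off by the subexponential-$\cQ$ condition. Conditional on $\theta=H$ the signals are i.i.d.\ Bernoulli$(p)$, so for each $Q\in\cQ_i^r$ we have $S(Q)\sim\mathrm{Bin}(|Q|,p)$. We need only bound
$$\Pr(I_i^r\notin Y_i^r\mid\theta=H)\le \sum_{Q\in\cQ_i^r}\Pr\big(S(Q)<\eta|Q|-\eps M_r\mid\theta=H\big)$$
and show that it tends to $0$ uniformly in $i$.

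The first step is to split by the size of $Q$. If $|Q|\le \eps M_r/\eta$, the right-hand side $\eta|Q|-\eps M_r$ is nonpositive. Since $S(Q)\ge 0$, the inequality $S(Q)<\eta|Q|-\eps M_r$ cannot hold, so the summand is zero.

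The second step handles the remaining regions, those with $|Q|>\eps M_r/\eta\ge \eps M_r$. Here the Chernoff bound used in Claim~\ref{clm:Y-2} gives
$$\Pr\big(S(Q)<\eta|Q|\mid\theta=H\big)\le \exp(-D(\eta\|p)|Q|)\le \exp(-D(\eta\|p)\,\eps M_r),$$
since dropping the $-\eps M_r$ only enlarges the event.

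The third step sums these bounds. We get
$$\Pr(I_i^r\notin Y_i^r\mid\theta=H)\le \sup_{i}|\cQ_i^r|\cdot \exp(-D(\eta\|p)\,\eps M_r).$$
For this to vanish, $M_r$ must grow at least linearly in $r$. This is the one point that needs care, and it holds because $G$ is infinite and connected: every ball $B_r(i)$ contains a geodesic path of length $r$ starting at $i$, so $|B_r(i)|\ge r+1$ and hence $M_r\ge r+1$. The subexponential-$\cQ$ condition says $\sup_i|\cQ_i^r|=e^{o(r)}$, so the product is at most $e^{o(r)-c\,r}\to0$, where $c=D(\eta\|p)\eps>0$. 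The bound does not depend on $i$, which gives the uniformity in the claim.

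The main obstacle is balancing the growth of $|\cQ_i^r|$ against the concentration rate. It is resolved by the linear lower bound $M_r\ge r+1$ together with the fact that, thanks to the slack term $\eps M_r$, only regions of size at least $\eps M_r$ contribute to the sum.
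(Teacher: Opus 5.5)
Your proposal is correct and follows essentially the same route as the paper's proof. Both use a union bound over $\cQ_i^r$, discard the regions with $|Q|\le \eps M_r/\eta$ (where the event is impossible), apply Chernoff to the remaining regions, and finish with $M_r\ge r+1$ and the subexponential-$\cQ$ condition; your geodesic argument for $M_r\ge r+1$ justifies a step the paper only asserts.
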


\begin{proof}
 
By a union bound and the Chernoff inequality, 
\begin{align*}
    \mathbb{P}(I_i^r\notin Y_i^r|\theta = H)
&\le
\sum_{Q\in \cQ_i^r} \mathbb{P}\left(S(Q)<\eta|Q| -  \varepsilon M_r  |\theta = H\right)\\
&=\sum_{\substack{Q\in \cQ_i^r:\\|Q| > \frac{\varepsilon}{\eta}M_r}} \Pr\left(S(Q)<\eta|Q| -  \varepsilon M_r  |\theta = H\right) \\
& \le
\sum_{\substack{Q\in \cQ_i^r:\\|Q| > \frac{\varepsilon}{\eta}M_r}} \mathbb{P}\left(S(Q)<\eta|Q| \  |\theta = H\right) \\
&\leq \sum_{\substack{Q\in \cQ_i^r:\\|Q| > \frac{\varepsilon}{\eta}M_r}}\exp(-D(\eta\|p)|Q|)\\
&\leq |\cQ_i^r| \exp\left(-D(\eta\|p)\frac{\varepsilon}{\eta}M_r\right) \\
& \leq |\cQ_i^r| \exp\left(-D(\eta\|p)\frac{\varepsilon}{\eta}(r+1)\right) .
\end{align*}
By the subexponential-$\cQ$ condition, the last expression goes to $0$ as $r$ goes to infinity, and does so uniformly over $i$.

    
\end{proof}

Fix any $q\in (0,1)$. We use the family $(Y_i^r)_{i\in N}$ constructed in \eqref{eq:self-sustaining-events} to establish network $q$-learning. First, we show that this family is self-sustaining. Using the identity $$\mathbb{P}(H, \ I_j^r\in Y^r_j \ \forall j\in N_i | I_i^r=x)
=
\mathbb{P}(H| I_i^r=x)
\mathbb{P}(I_j^r\in Y^r_j \ \forall j\in N_i | I_i^r=x, \theta = H)$$ and Claims~\ref{clm:Y-1} and~\ref{clm:Y-2}, we obtain 
$$
\lim_{r \to \infty}\inf_{i \in N}\inf_{x\in Y^r_i}\mathbb{P}(H,\ I_j^r\in Y_j^r\ \forall j\in N_i| I_i^r=x)= 1.
$$

Since $q<1$, it follows that, for all sufficiently large $r$, $Y_i^r \subseteq \birq \left (H\cap \left(\bigcap_{j\in N_i} Y_j^r\right)\right)$ for all $i$, meaning that the family $(Y_i^r)_{i\in N}$ is self-sustaining. By Lemma~\ref{lemma:self-sustaining_common_beliefs}, $Y_i^r \subseteq C_{i, r}^q$. Therefore, by Claim~\ref{clm:Y-3},
$$
    \Pr(C_{i, r}^q | \theta = H)\geq \Pr(I_i^r \in Y_i^r| \theta = H) \to 1 \quad \text{as } r\to \infty.
$$
Hence, network  $q$-learning of $H$ is obtained. Since $q\in (0,1)$ was arbitrary, network common learning of $H$ occurs. This completes the proof of Theorem~\ref{thm:common-learning}.

\subsection{Proof of Lemma \ref{lem:fixedpoint}}
\label{pf:fixedpoint}
\begin{proof}
    We first claim that the sequence of sets $(C_{j,r}^{q,n})_n$ is decreasing. This follows from the fact that the $\birq$ operator is monotone with respect to set inclusion. 

    Let $E_n = H\cap \left(\bigcap_{j\in N_i}C_{j,r}^{q,n}\right)$. Since $(C_{j,r}^{q,n})_n$ is decreasing, $(E_n)_{n\geq 0}$ is also decreasing. Hence, $E_n \downarrow \cap_{n} E_n$. Next, by continuity from above, 
    $$\Pr(\cap_{n\geq 0} E_n|I_i^r) = \lim_{n\to \infty}\Pr(E_n|I_i^r) = \inf_{n\geq 0}\Pr(E_n|I_i^r).$$
    Thus, 
    $$\omega\in \bigcap_{n\geq 0} \birq (E_n)\iff \forall n: \ \Pr(E_n|I_i^r)(\omega) \geq q \iff \inf_{n\geq 0} \Pr(E_n|I_i^r)(\omega) \geq q\iff $$
    $$\iff \Pr(\cap_{n\geq 0} E_n|I_i^r)(\omega) \geq q \iff \omega \in \birq (\cap_{n\geq 0} E_n).$$

So, $\bigcap_{n\geq 0} \birq (E_n) = \birq (\cap_{n\geq 0} E_n)$. Finally, 

    $$\cir =\Omega \cap \bigcap_{n\geq 1} C^{q,n}_{i,r}  = \bigcap_{n\geq 0} C^{q,n+1}_{i,r}  = \bigcap_{n\geq 0}\birq(E_n) = \birq (\cap_{n\geq 0} E_n) = $$
    $$= \birq \left(\bigcap_{n\geq 0} \left(H\cap \bigcap_{j\in N_i}C_{j,r}^n\right)\right) = \birq \left(H\cap \bigcap_{j\in N_i}\bigcap_{n\geq 0} C_{j,r}^n\right) = \birq \left(H\cap \bigcap_{j\in N_i} C_{j,r}\right).$$
\end{proof}

\subsection{Proof of Theorem \ref{thm:path-impossibility}}
\begin{proof}

Consider two agents $i, j$, and signal realizations $x_i \in O_{i}^r$ and $x_j \in O_{j}^{r}$ for each agent. We say that $x_i$ and $x_j$ are \emph{compatible} if they agree on the overlapping region $B_r(i) \cap B_r(j)$. 

By the definition of $C_{i,r}^{q,n}$, the event that $i$ has $n$\textsuperscript{th} order network $q$-belief, $C_{i,r}^{q,n}$ is $\sigma(I_i^r)$-measurable. That is, this event depends only on the information available to agent $i$. As such, we can identify this event with a subset of $O_i^r$, the possible observations of agent $i$. Accordingly, for $x \in O_r^i$ we write $x \in C_{i,r}^{q,n}$ whenever agent $i$ has $n$\textsuperscript{th} order network $q$-belief upon observing $x$.


By the theorem hypothesis, there is  an $(r,m)$-bottleneck path $i_1,i_2, \ldots,i_T$. Consider two adjacent agents along the bottleneck path, $i_t$ and $i_{t+1}$. Let $x_t \in O_{i_t}^r$ and $x_{t+1} \in O_{i_{t+1}}^r$ be compatible. Suppose that $x_{t+1} \not \in C_{i_{t+1},r}^{q}$, i.e., when agent $i_{t+1}$ observes $x_{t+1}$, $i_{t+1}$ does not have network $q$-belief. We claim that this implies that  $x_{t} \not \in C_{i_{t}}^{q}$, i.e., agent $i_t$ does not have  network $q$-belief upon observing the compatible $x_t$.
Indeed, since $|B_r(i_{t+1}) \setminus B_r(i_t)| \leq m$,
and the signals observed by $i_{t+1}$ but not by $i_t$ are independent conditional on the state, it follows that
\begin{align*}
    \Pr(I_{i_{t+1}}^r = x_{t+1}|I_{i_t}^r = x_t, H) = \Pr(\{s_l = x_{i_{t+1},l} \; \forall l \in \left(B_r(i_{t+1}) \setminus B_r(i_t)\right)\}|H)
    \geq (1-p)^m.
\end{align*}
By our assumption that $x_{t+1} \not \in C_{i_{t+1}}^{q}$,
$$\Pr \left(C^{q}_{i_{t+1},r} \middle\vert I^r_{i_t} = x_t, H \right) \leq 1- \Pr(I_{i_{t+1}}^r = x_{t+1}|I_{i_t}^r = x_t, H) \leq 1 - (1-p)^m,$$
and hence
$$\Pr\left(H \cap \bigcap_{j \in N_{i_t}} C^{q}_{j,r} \middle\vert I^r_{i_t} = x_t \right) \leq \Pr\left(C^{q}_{i_{t+1},r}\,\middle\vert\, I^r_{i_t}=x_t,H\right)\, \Pr(H | I^r_{i_t}=x_t) \leq$$ $$\leq 1-(1-p)^m < q.$$
Thus 
\begin{align*}
    x_t \not \in \mathcal{B}^q_{i_t,r}\left(H \cap \left(\bigcap_{j \in N_{i_t}} C^{q}_{j,r}\right)\right) =C^{q}_{i_t,r},
\end{align*}
by the fixed-point property of $C^q_{i,r}$ (Lemma~\ref{lem:fixedpoint}). We have thus shown that \begin{align}
    \label{eq:cascade}
    x_{t+1} \not \in C^q_{i_{t+1},r}\implies x_t \not \in C^q_{i_{t},r}.
\end{align}

Fix an arbitrary observation $x_1 \in O_{i_1}^r$ for agent $i_1$. Since $B_r(i_1)\cap B_r(i_T) = \emptyset$, there exists a signal configuration on $\bigcup_t B_r(i_t)$ that agrees with $x_1$ on $B_r(i_1)$ and with $0$ (the all zeros observation) on $B_r(i_T)$. Hence, there are $x_2,\ldots,x_T$ such that $x_T=0$ and $x_1,x_2,\ldots,x_T$ are compatible. Note that $0 \not \in C^{q}_{i_T,r}$, since when an agent observes all zero signals, their posterior belief in $H$ is below one half, and hence below $q$.

Thus, using \eqref{eq:cascade}, it follows by induction that $x_1 \notin C^{q}_{i_1,r}$:
$$
x_{T} \notin C^{q}_{i_T,r} \implies x_{T-1} \notin C^{q}_{i_{T-1},r} \implies \cdots \implies x_1 \notin C^{q}_{i_1,r}
$$
Since the choice of $x_1$ is arbitrary, we have shown that $C_{i_1,r}^{q} = \emptyset$: $i_1$ never has network  $q$-belief. It then follows from the fixed point property of these sets (Lemma~\ref{lem:fixedpoint}), that the same holds for $i_1$'s neighbors, and by induction for all the agents.

Finally, by Proposition~\ref{prop:maximal_eqm}, it follows that all agents play the safe action in every equilibrium.
\end{proof}

\bibliography{refs}

\end{document}